\newtheorem{theorem}{Theorem}
\newtheorem{proposition}[theorem]{Proposition}
\newtheorem{lemma}[theorem]{Lemma}
\newcommand{\be}{\begin{equation}}
\newcommand{\ee}{\end{equation}}
\newcommand{\bea}{\begin{eqnarray}}
\newcommand{\eea}{\end{eqnarray}}
\newcommand{\ba}{\begin{array}}
\newcommand{\ea}{\end{array}}
\newcommand{\bean}{\begin{eqnarray*}}
\newcommand{\eean}{\end{eqnarray*}}
\newcommand{\pa}{\partial}
\begin{document}
\title{on the squared eigenfunction symmetry of the Toda lattice hierarchy}
\author{Jipeng Cheng\dag,Jingsong He$^{*}$ \ddag
 }
\dedicatory {  \dag \ Department of Mathematics,China University of
Mining and Technology, Xuzhou, Jiangsu 221116 ,
P.\ R.\ China\\
\ddag \ Department of Mathematics, Ningbo University, Ningbo ,
Zhejiang 315211, P.\ R.\ China }
\thanks{$^*$Corresponding author. Email:hejingsong@nbu.edu.cn; jshe@ustc.edu.cn.}
\begin{abstract}
The squared eigenfunction symmetry for the Toda lattice hierarchy is
explicitly constructed in the form of the Kronecker product of the
vector eigenfunction and the vector adjoint eigenfunction, which can
be viewed as the generating function for the additional symmetries
when the eigenfunction and the adjoint eigenfunction are the wave
function and the adjoint wave function respectively. Then after the
Fay-like identities and some important relations about the wave
functions are investigated, the action of the squared eigenfunction
related to the additional symmetry on the tau function is derived,
which is equivalent to the  Adler-Shiota-van Moerbeke (ASvM)
formulas.

\textbf{Keywords}:  squared eigenfunction symmetry, the Toda lattice
hierarchy, additional symmetry, Fay-like identities, ASvM formula.
\end{abstract}
\maketitle
\section{Introduction}
The squared eigenfunction
symmetry\cite{oevel1993,oevel1994,1OC98,2OC98}, also called ``ghost"
symmetry\cite{aratyn1998}, is a kind of symmetry generated by
eigenfunctions and adjoint eigenfunctions in the integrable system.
One important application lies in that by identifying the squared
eigenfunction symmetry with the usual flow of the integrable
hierarchy, one can get the corresponding symmetry
constraint\cite{oevel1994,2OC98,Cheng91,SS91,KS92,Cheng92,LorisWillox99,Tu2011}.
The other is its connection with the additional
symmetry\cite{aratyn1998,Cheng10,Li12}, which is the symmetry
depending explicitly on the space and time
variables\cite{Fokas1981,Chen1983,OS86,ASM95,D95,takasaki1996,Tu07,he2007,tian2010}.
By now, much work has done in the field of the squared eigenfunction
symmetry. For examples, 1) the squared eigenfunction symmetry for
the KP hierarchy is studied in its connection with the corresponding
additional symmetry in \cite{aratyn1998}; 2) by using the squared
eigenfunction symmetry to construct the new flow, the extended
integrable systems\cite{1Zeng08, 2Zeng08} are developed, which
contain the integrable equations with self-consistent sources; 3)
the squared eigenfunction symmetries for the BKP hierarchy and the
discrete KP hierarchy are systematically developed in \cite{Cheng10}
and \cite{Li12} respectively.

The Toda lattice equation\cite{toda}, as an important integrable
system, describes the motion of one-dimensional particles with
exponential interaction of neighbors, which plays significant role
in physics. The Toda lattice hierarchy was first introduced by Ueno
and Takasaki \cite{uenotaksasai} to generalize the Toda lattice
equations along the theory about the KP hierarchy\cite{DJKM}, which
is investigated in many aspects: such as interface growth
\cite{interface2000}, connection with Laplace-Darboux transformation
for general second order partial differential equations
\cite{nimmo1997}, connection with infinite dimensional Lie algebras
\cite{uenotaksasai,Mikhailov1981}, and its two extensions: extended
Toda hierarchy \cite{carlet03} and extended bigraded Toda
hierarchy\cite{carlet06,li2010,li2012}. However, the squared
eigenfunction symmetry of the Toda lattice hierarchy is not given
explicitly in literature.

In this paper, a handy form of the squared eigenfunction symmetry of
the Toda lattice hierarchy is given in the form of the Kronecker
product of the vector eigenfunctions and the vector adjoint
eigenfunctions. Then the relation with the additional symmetry of
the Toda lattice hierarchy is investigated: the particular squared
eigenfunction symmetry generated by the wave function and the
adjoint wave function can be viewed as the generating functions of
the additional symmetries for the Toda lattice
hierarchy\cite{ASM95,takasaki1996,cheng2011}. Next, in order to show
the action of the particular squared eigenfunction symmetry on the
tau function, which is actually the so-called Adler-Shiota-van
Moerbeke (ASvM) formulas \cite{ASM95,D95,L95}, the Fay-like
identities and some important relations about the wave functions are
studied. The Fay-like identities for the Toda lattice hierarchy
shows the algebraic properties for the tau
functions\cite{AvM99,teo2006,takasaki2007}. With the help of the
Fay-like identities, some relations about the wave functions are
derived in this paper. At last, upon the preparation above, the
action of the squared eigenfunction symmetry on the tau function is
obtained. By considering the connection between the squared
eigenfunction symmetry and the additional symmetry, another proof of
the ASvM formula is in fact showed.

This paper is organized in the following way. In Section 2, some
basic knowledge about the Toda lattice hierarchy is reviewed.
 Then, the squared eigenfunction symmetry of the Toda lattice hierarchy is constructed in Section 3.
 Next, in Section 4, the Fay-like identities and some relations about the wave functions are given.
 The action of the squared eigenfunction symmetry on the tau function is showed in Section 5.
 At last, in section 6, some conclusions and discussions are given.

\section{the Toda lattice hierarchy}
The Toda lattice hierarchy \cite{uenotaksasai,ASM95} is defined in
the Lax forms as
\begin{equation}\label{tlhierarchy}
    \pa_{x_n}L=[(L^n_1,0)_+,L]\ \ \ \text{and}\ \ \ \pa_{y_n}L=[(0,L_2^n)_+,L],\ \ \ n=1,2,\cdots
\end{equation}
with  $L$ be a pair of infinite matrices given by
\begin{equation}\label{laxoperator}
    L=(L_1,L_2)=\Big(\sum _{-\infty<i\leq 1}{\rm diag}[a_i^{(1)}(s)]\Lambda^i,\sum _{-1\leq i<\infty}{\rm diag}[a_i^{(2)}(s)]\Lambda^i\Big)\in
    \mathscr{D},
\end{equation}
where $\Lambda=(\delta_{j-i,1})_{i,j\in \mathbb{Z}}$, and
$a_i^{(1)}(s)$ and $a_i^{(2)}(s)$ are the functions of
$x=(x_1,x_2,\cdots)$ and $y=(y_1,y_2,\cdots)$, such that
$$a_1^{(1)}(s)=1 \ \ \ \text{and}\ \ \ a_1^{(2)}(s)\neq 0\ \ \ \forall s,$$
and the algebra
$$ \mathscr{D}=\{(P_1,P_2)\in {\rm gl}((\infty))\times {\rm gl}((\infty))\ | \ (P_1)_{ij}=0 \ {\rm for}\  j-i\gg0, \ (P_2)_{ij}=0 \ {\rm for} \  i-j\gg0\}$$
has the following splitting:
\begin{eqnarray*}
\mathscr{D}&=&\mathscr{D}_+ +\mathscr{D}_-,\\
\mathscr{D}_+&=&\{(P,P)\in \mathscr{D}\ | \ (P)_{ij}=0 \ {\rm for} \  |i-j|\gg 0\}=\{(P_1,P_2)\in \mathscr{D}\ | \ P_1=P_2\},\\
\mathscr{D}_-&=&\{(P_1,P_2)\in \mathscr{D}\ | \ (P_1)_{ij}=0\ {\rm
for}\ j\geq i, \ (P_2)_{ij}=0 \ {\rm for} \  i> j\},
\end{eqnarray*}
thus $(P_1,P_2)=(P_1,P_2)_+ +(P_1,P_2)_-$ can be given by
$$(P_1,P_2)_+=(P_{1u}+P_{2l},P_{1u}+P_{2l}),\ (P_1,P_2)_-=(P_{1l}-P_{2l},P_{2u}-P_{1u}),$$
where for a matrix $P$, $P_u$ and $P_l$ denote the upper (including
diagonal) and strictly lower triangular parts of $P$, respectively.

The Lax operator of the Toda lattice hierarchy (\ref{tlhierarchy})
can be expressed in terms of wave matrices $W=(W_1,W_2)$:
\begin{equation}\label{wavematrices}
    W_1(x,y)=S_1(x,y)e^{\xi(x,\Lambda)},\quad W_2(x,y)=S_2(x,y)e^{\xi(y,\Lambda^{-1})}
\end{equation}
as follows
\begin{equation}\label{laxwaveexpression}
    L=W(\Lambda,\Lambda^{-1})W^{-1}=S(\Lambda,\Lambda^{-1})S^{-1},
\end{equation}
where $S=(S_1,S_2)$ has the forms below
\begin{eqnarray}
S_1(x,y)=\sum_{i\geq 0} {\rm diag}[c_i(s;x,y)]
\Lambda^{-i},&&S_2(x,y)=\sum_{i\geq 0} {\rm diag}[c_i'(s;x,y)]
\Lambda^{i}, \label{hatwavematrices}
\end{eqnarray}
with $c_o(s;x,y)=1$ and $c_o'(s;x,y)\neq 0$ for any $s$, and
$\xi(x,\Lambda^{\pm1})=\sum_{n\geq1}x_n\Lambda^{\pm n}$. The wave
matrices evolve according to
\begin{eqnarray}
\pa_{x_n}S=-(L_1^n,0)_-S,&& \pa_{y_n}S=-(0,L_2^n)_-S, \label{hatwavematricesequation}\\
\pa_{x_n}W=(L_1^n,0)_+W,&& \pa_{y_n}W=(0,L_2^n)_+W.
\label{wavematricesequation}
\end{eqnarray}

The vector wave functions $\Psi=(\Psi_1,\Psi_2)$ are defined in the
following way
\begin{equation}\label{vectorwavefunction}
    \Psi_i(x,y;z)=(\Psi_i(n;x,y;z))_{n\in\mathbb{Z}}=W_i(x,y)\chi(z),
\end{equation}
where $\chi(z)=(z^i)_{i\in\mathbb{Z}}$, while the adjoint wave
functions are defined by
\begin{equation}\label{adjointwavefunction}
    \Psi_i^*(x,y;z)=(\Psi_i^*(n;x,y;z))_{n\in\mathbb{Z}}:=(W_i(x,y)^{-1})^T\chi^*(z)
\end{equation}
with $\chi^*(z)=\chi(z^{-1})$  and $T$ refers to the matrix
transpose. Thus from (\ref{laxwaveexpression}), we know
\begin{equation}\label{laxactonwave}
    L\Psi=(z,z^{-1})\Psi.
\end{equation}
And further the wave functions satisfy the following equations
\begin{eqnarray}
\pa_{x_n}\Psi=(L_1^n,0)_+\Psi,&&
\pa_{y_n}\Psi=(0,L_2^n)_+\Psi.\label{vectorwavefunctionequation}
\end{eqnarray}

If vector functions $q=(q(n;x,y))_{n\in\mathbb{Z}}$ and
$r=(r(n;x,y))_{n\in\mathbb{Z}}$ satisfy
\begin{eqnarray}
\pa_{x_n}q=(L_1^n)_uq,&&\pa_{y_n}q=(L_2^n)_lq,\nonumber\\
\pa_{x_n}r=-(L_1^n)^T_uq,&&\pa_{y_n}r=-(L_2^n)^T_lr,\label{qrdef}
\end{eqnarray}
we call them \textbf{vector eigenfunction} and \textbf{vector
adjoint eigenfunction} for the Toda lattice hierarchy respectively.
Obviously, the wave functions $\Psi_1$ and $\Psi_2$ are
eigenfunctions, and the adjoint wave functions $\Psi_1^*$ and
$\Psi_2^*$ are the adjoint eigenfunctions.

From (\ref{wavematricesequation}), the bilinear relation for the
Toda lattice hierarchy can be easily got
\begin{equation}\label{bilinearrelation}
   W_1(x,y)W_1(x',y')^{-1}=W_2(x,y)W_2(x',y')^{-1}
\end{equation}
for any $x,x'$ and $y,y'$. In order to rewrite
(\ref{bilinearrelation}) in terms of the (adjoint) wave functions,
the following lemma\cite{ASM95} is needed.
\begin{lemma}\label{lemma}
Given two operators $U=(U_1,U_2)$, $V=(V_1,V_2)\in \mathscr{D}$
depending on $x$ and $y$, one has
\begin{eqnarray}
U_1V_1&=&{\rm res}_{z}\frac{1}{z}\big(U_1\chi(z)\big)\otimes \big(V_1^T\chi^*(z)\big),\label{lemma1}\\
U_2V_2&=&{\rm res}_{z}\frac{1}{z}\big(U_2\chi(z^{-1})\big)\otimes
\big(V_2^T\chi^*(z^{-1})\big),\label{lemma2}
\end{eqnarray}
where ${\rm res}_z\sum_ia_iz^i=a_{-1}$ and $(A\otimes
B)_{ij}=A_iB_j$.
\end{lemma}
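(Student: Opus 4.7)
The plan is to verify both identities by direct entrywise computation, the only real subtlety being to check that the infinite sums in question are in fact finite so that the formal residue operation is unambiguous.

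For the first identity, I would start from the definitions $\chi(z)_k=z^k$ and $\chi^*(z)_k=z^{-k}$ to write
\begin{equation*}
\big(U_1\chi(z)\big)_i=\sum_{k}(U_1)_{ik}z^k,\qquad \big(V_1^T\chi^*(z)\big)_j=\sum_{l}(V_1)_{lj}z^{-l}.
\end{equation*}
Using $(A\otimes B)_{ij}=A_iB_j$, the $(i,j)$ entry of $(U_1\chi(z))\otimes(V_1^T\chi^*(z))$ is then the double series $\sum_{k,l}(U_1)_{ik}(V_1)_{lj}z^{k-l}$. Multiplying by $1/z$ and extracting the coefficient of $z^{-1}$ isolates the diagonal $k=l$, leaving $\sum_k (U_1)_{ik}(V_1)_{kj}=(U_1V_1)_{ij}$. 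For the second identity the same computation goes through after replacing $\chi(z)$ by $\chi(z^{-1})$ and noting $\chi^*(z^{-1})=\chi(z)$, so the $(i,j)$ entry of the tensor product becomes $\sum_{k,l}(U_2)_{ik}(V_2)_{lj}z^{l-k}$ and the residue again selects $k=l$.

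The only point that requires a little care is convergence. For fixed $i,j$ one needs $\sum_k(U_1)_{ik}(V_1)_{kj}$ to have only finitely many nonzero terms: the membership $U_1,V_1\in\mathscr{D}$ forces $(U_1)_{ik}=0$ once $k-i$ is sufficiently large, bounding $k$ above, while $(V_1)_{kj}=0$ once $j-k$ is sufficiently large, bounding $k$ below. Reversing the inequalities handles the $U_2V_2$ case. Because this reduces the lemma to a term-by-term check on a finite index set, I do not expect any genuine obstacle; the identity is essentially the standard ``integral kernel'' rewriting of matrix multiplication familiar from the $\psi$–$\psi^*$ formalism of the KP hierarchy, now adapted to the two-component structure of $\mathscr{D}$.
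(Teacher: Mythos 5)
Your computation is correct: the entrywise expansion of $\big(U_1\chi(z)\big)\otimes\big(V_1^T\chi^*(z)\big)$, the extraction of the diagonal $k=l$ by the residue, and the finiteness argument from the defining conditions of $\mathscr{D}$ (upper bound on $k$ from $U_1$, lower bound from $V_1$, reversed for the second component) are all exactly what is needed. The paper itself offers no proof of this lemma, citing it from Adler--Shiota--van Moerbeke, so your direct verification is the standard argument and fills in precisely what the paper leaves to the reference.
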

Therefore, according to
(\ref{vectorwavefunction})(\ref{adjointwavefunction}), the bilinear
relation (\ref{bilinearrelation}) is equivalent to the following
residue formula,
\begin{equation}\label{wavebilinearmatrix}
    {\rm res}_{z}\frac{1}{z}\Psi_1(x,y;z)\otimes \Psi_1^*(x',y';z)={\rm res}_{z}\frac{1}{z}\Psi_2(x,y;z^{-1})\otimes
    \Psi_2^*(x',y';z^{-1}),
\end{equation}
which further can be showed as
\begin{equation}\label{wavebilinearcomponent}
    {\rm res}_{z}\frac{1}{z}\Psi_1(s;x,y;z)\Psi_1^*(s';x',y';z)={\rm
    res}_{z}\frac{1}{z}\Psi_2(s;x,y;z^{-1})\Psi_2^*(s';x',y';z^{-1}).
\end{equation}
Using the bilinear identity (\ref{wavebilinearcomponent}) above,
Ueno and Takasaki \cite{uenotaksasai} proved that there exists a tau
function $\tau(s,x,y)$ such that
\begin{eqnarray}
\Psi_1(s;x,y;z)&=&\frac{\tau(s;x-[z^{-1}],y)}{\tau(s;x,y)}e^{\xi(x,z)}z^s,\nonumber\\
\Psi_2(s;x,y;z)&=&\frac{\tau(s+1;x,y-[z])}{\tau(s;x,y)}e^{\xi(y,z^{-1})}z^s,\nonumber\\
\Psi_1^*(s;x,y;z)&=&\frac{\tau(s+1;x+[z^{-1}],y)}{\tau(s+1;x,y)}e^{-\xi(x,z)}z^{-s},\nonumber\\
\Psi_2^*(s;x,y;z)&=&\frac{\tau(s;x,y+[z])}{\tau(s+1;x,y)}e^{-\xi(y,z^{-1})}z^{-s},\label{tauexpression}
\end{eqnarray}
where $[z]=(z,\frac{1}{2}z^2,\frac{1}{3}z^3,\cdots)$.

Another important object is the vertex operators \cite{ASM95} for
the Toda lattice hierarchy, which are defined in the following way
\begin{equation}\label{vertexoperators}
\mathbb{X}(x,\lambda,\mu):=\Big(\Big(\frac{\mu}{\lambda}\Big)^nX(x,\lambda,\mu)\Big)_{n\in\mathbb{Z}},
\quad\tilde{\mathbb{X}}(y,\lambda,\mu):=\Big(\Big(\frac{\lambda}{\mu}\Big)^nX(y,\lambda,\mu)\Big)_{n\in\mathbb{Z}},
\end{equation}
where
\begin{eqnarray}
    X(x,\lambda,\mu)&\equiv&\frac{1}{\lambda}:e^{\theta(\lambda)}::e^{\theta(\mu)}:=\frac{1}{\lambda}e^{\xi(x+[\lambda^{-1}],\mu)-\xi(x,\lambda)}
    e^{\sum_1^\infty\frac{1}{l}(\lambda^{-l}-\mu^{-l})\frac{\partial}{\partial
    x_l}}\nonumber\\
    &=&-\frac{1}{\mu}e^{\xi(x,\mu)-\xi(x-[\mu^{-1}],\lambda)}
    e^{\sum_1^\infty\frac{1}{l}(\lambda^{-l}-\mu^{-l})\frac{\partial}{\partial
    x_l}}+\delta(\lambda,\mu),
\end{eqnarray}
and
\begin{eqnarray}
\theta(\lambda)&\equiv&-\sum_{l=1}^\infty\lambda^lx_l+\sum_{l=1}^\infty\frac{1}{l}\lambda^{-l}\frac{\partial}{\partial
    x_l},\label{theta}
\end{eqnarray}
the columns $:...:$ indicate Wick normal ordering with respect to
the creation/annihilation``modes" $x_l$ and
$\frac{\partial}{\partial x_l}$, respectively. And the
delta-function is defined as
\begin{eqnarray}
\delta(\lambda,\mu)&\equiv&\sum_{n=-\infty}^\infty\frac{\mu^{n}}{\lambda^{n+1}}
=\frac{1}{\lambda}\frac{1}{1-\frac{\mu}{\lambda}}+\frac{1}{\mu}\frac{1}{1-\frac{\lambda}{\mu}}.\label{delta}
\end{eqnarray}
According to (\ref{tauexpression}), one can find
\begin{eqnarray}
\frac{\mathbb{X}(x,\lambda,\mu)\tau}{\tau}&=&\Big(\frac{1}{\lambda^2}\Psi_1(n;x+[\lambda^{-1}],y;\mu)\Psi_1^*(n-1;x,y;\lambda)\Big)_{n\in\mathbb{Z}}\nonumber\\
&=&\Big(-\frac{1}{\lambda\mu}\Psi_1(n;x,y;\mu)\Psi_1^*(n-1;x-[\mu^{-1}],y;\lambda)+\delta(\lambda,\mu)\Big)_{n\in\mathbb{Z}},\label{vertexwave1}\\
\frac{\tilde{\mathbb{X}}(y,\lambda,\mu)\tau}{\tau}&=&\Big(\frac{1}{\mu}\Psi_2(n-1;x,y+[\lambda^{-1}];\mu^{-1})\Psi_2^*(n-1;x,y;\lambda^{-1})\Big)_{n\in\mathbb{Z}}\nonumber\\
&=&\Big(-\frac{1}{\mu}\Psi_2(n;x,y;\mu^{-1})\Psi_2^*(n;x,y-[\mu^{-1}];\lambda^{-1})+\delta(\lambda,\mu)\Big)_{n\in\mathbb{Z}}.\label{vertexwave2}
\end{eqnarray}

The additional symmetry for the Toda lattice hierarchy
\cite{ASM95,takasaki1996,cheng2011} can be expressed in terms of the
Orlov-Shulman operators \cite{OS86}, which is defined as
\begin{equation}\label{osoperator}
    M\equiv(M_1,M_2)=W(\varepsilon,\varepsilon^*)W^{-1},
\end{equation}
where
$$\varepsilon=\rm{diag}[s]\Lambda^{-1},\quad \varepsilon^*=-\varepsilon^T
+\Lambda,$$
 satisfying
\begin{eqnarray}
M\Psi=(\pa_z,\pa_{z^{-1}})\Psi,&&[L,M]=(1,1), \nonumber\\
\pa_{x_n}M=[(L_1^n,0)_+,M],&&\pa_{y_n}M=[(0,L_2^n)_+,M].
\label{osproperty}
\end{eqnarray}
By introducing additional independent variables $x_{m,l}^*$ and
$y_{m,l}^*$, the actions of the additional symmetry on the wave
matrices are showed as
\begin{equation}\label{additionalsymactonwavematrices}
    \pa_{x_{m,l}^*}W=-(M_1^mL_1^l,0)_-W,\quad
    \pa_{y_{m,l}^*}W=-(0,M_2^mL_2^l)_-W.
\end{equation}
One can further find
\begin{eqnarray}
\pa_{x_{m,l}^*}\Psi=-(M_1^mL_1^l,0)_-\Psi,&& \pa_{y_{m,l}^*}\Psi=-(0,M_2^mL_2^l)_-\Psi,\nonumber\\
\pa_{x_{m,l}^*}L=[-(M_1^mL_1^l,0)_-,L],&& \pa_{y_{m,l}^*}L=[-(0,M_2^mL_2^l)_-,L],\nonumber\\
\pa_{x_{m,l}^*}M=[-(M_1^mL_1^l,0)_-,M],&&
\pa_{y_{m,l}^*}M=[-(0,M_2^mL_2^l)_-,M],\label{addsymactonothers}
\end{eqnarray}
and by acting on the space of the wave matrices, $\pa_{x_{m,l}^*}$
and $\pa_{y_{m,l}^*}$ forms into Lie algebra $w_\infty\times
w_\infty$.
\section{The squared eigenfunction symmetry for the Toda lattice hierarchy}

Given a couple of vector (adjoint) eigenfunctions $q$ and $r$,
\textbf{the squared eigenfunction flow} of the Toda lattice
hierarchy can be defined by its actions on the wave operators,
\begin{equation}\label{gsymw}
\pa_\alpha W_1=(q\otimes r)_l W_1,\quad \pa_\alpha W_2=-(q\otimes
r)_u W_2,
\end{equation}
or
\begin{equation}\label{gsymwnew}
\pa_\alpha W=(q\otimes r,0)_-W=-(0,q\otimes r)_-W.
\end{equation}
Note that for a matrix $P$, $P_u$ and $P_l$ denote the upper
(including diagonal) and strictly lower triangular parts of $P$
respectively, and $(A\otimes B)_{ij}=A_iB_j$.

 According to
(\ref{laxwaveexpression}), one can further have the squared
eigenfunction flow on the Lax operator
\begin{equation}\label{gsymlax}
\pa_\alpha L_1=[(q\otimes r)_l,L_1],\quad \pa_\alpha L_2=-[(q\otimes
r)_u, L_2],
\end{equation}
or
\begin{equation}\label{gsymlaxnew}
\pa_\alpha L=[(q\otimes r,0)_-,L]=-[(0,q\otimes r)_-,L].
\end{equation}

The proposition below shows that this squared eigenfunction flow is
indeed a kind of symmetry, and thus is called \textbf{the squared
eigenfunction symmetry}
\begin{proposition}
\begin{eqnarray}
[\pa_\alpha,\pa_{x_n}]=[\pa_\alpha,\pa_{y_n}]=0.
\end{eqnarray}
\end{proposition}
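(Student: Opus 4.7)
The plan is to verify $[\partial_\alpha, \partial_{x_n}]W = 0$ (and then similarly $[\partial_\alpha, \partial_{y_n}]W = 0$) directly at the level of wave matrices; because $W$ is invertible, this is equivalent to commutativity of the two flows on the Lax operator and hence on every Toda-related object.

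Setting $A = (L_1^n, 0)_+$ and $B = (q\otimes r, 0)_-$, (\ref{wavematricesequation}) and (\ref{gsymwnew}) give $\partial_{x_n} W = AW$ and $\partial_\alpha W = BW$, so a one-line computation produces
\begin{equation*}
[\partial_\alpha, \partial_{x_n}]W = \bigl(\partial_\alpha A - \partial_{x_n} B + [A, B]\bigr)W,
\end{equation*}
and the entire problem reduces to the zero-curvature identity $\partial_\alpha A - \partial_{x_n} B + [A, B] = 0$ in $\mathscr{D}$. For the first term, (\ref{gsymlax}) yields $\partial_\alpha L_1^n = [(q\otimes r)_l, L_1^n]$, and the projection $(\cdot, 0)_+$ supplies an explicit expression. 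For the second term, the elementary identity $[M, q\otimes r] = (Mq)\otimes r - q\otimes (M^T r)$ combined with the eigenfunction and adjoint eigenfunction equations (\ref{qrdef}) gives the crucial relation
\begin{equation*}
\partial_{x_n}(q\otimes r) = (L_1^n)_u q \otimes r - q \otimes (L_1^n)_u^T r = [(L_1^n)_u,\, q\otimes r],
\end{equation*}
after which the projection $(\cdot, 0)_-$ provides $\partial_{x_n} B$ explicitly. The bracket $[A, B]$ is immediate because pairs in $\mathscr{D}$ multiply componentwise.

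The only real obstacle is the triangular-algebra bookkeeping that checks the two-component expression collapses. Writing $E = L_1^n$, $F = q\otimes r$ and splitting $E = E_u + E_l$, $F = F_u + F_l$, and using repeatedly that a product of two upper triangular matrices is upper triangular while a product of two strictly lower triangular matrices is strictly lower (so $([F_l, E_l])_u = 0$ and $([E_u, F_u])_l = 0$), both components of $\partial_\alpha A - \partial_{x_n} B + [A, B]$ reduce to a single expression of the shape $[E_u, F_l] - [E_u, F_l]$ and therefore vanish. The argument for $[\partial_\alpha, \partial_{y_n}] = 0$ is completely parallel, replacing $(L_1^n, 0)_+$ by $(0, L_2^n)_+$ and using the companion identity $\partial_{y_n}(q\otimes r) = [(L_2^n)_l, q\otimes r]$ coming from the second half of (\ref{qrdef}).
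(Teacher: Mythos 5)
Your proposal is correct and follows essentially the same route as the paper: both rest on the relation $\pa_{x_n}(q\otimes r)=[(L_1^n)_u,\,q\otimes r]$ coming from (\ref{qrdef}), on $\pa_\alpha L_1^n=[(q\otimes r)_l,L_1^n]$, and on the triangular facts $([E_u,F_u])_l=([E_l,F_l])_u=0$. The only difference is packaging --- you phrase the cancellation as a zero-curvature identity for the pair $\big((L_1^n,0)_+,(q\otimes r,0)_-\big)$ in $\mathscr{D}$, whereas the paper carries out the equivalent computation componentwise on $W_1$ and invokes symmetry for the remaining cases.
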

\begin{proof}
In fact, according to (\ref{tlhierarchy}),
(\ref{wavematricesequation}), (\ref{gsymw}) and (\ref{gsymlax})
\begin{eqnarray*}
[\pa_\alpha,\pa_{x_n}]W_1
&=&\pa_\alpha\Big((L_1^n)_uW_1\Big)-\pa_{x_n}\Big((q\otimes r)_lW_1\Big)\\
&=&[(q\otimes r)_l,L_1^n]_uW_1+(L_1^n)_u(q\otimes
r)_lW_1\\
&&-((L_1^n)_uq\otimes r)_lW_1+(q\otimes (L_1^n)_u^Tr)_lW_1-(q\otimes
r)_l(L_1^n)_uW_1\\
&=&[(q\otimes r)_l,L_1^n]_uW_1+[(L_1^n)_u,(q\otimes
r)_l]W_1\\
&&-\Big((L_1^n)_u(q\otimes r)\Big)_lW_1+\Big((q\otimes r)(L_1^n)_u\Big)_lW_1\\
&=&[(q\otimes r)_l,L_1^n]_uW_1+[(L_1^n)_u,(q\otimes
r)_l]W_1-[(L_1^n)_u,q\otimes r]_lW_1\\
&=&[(q\otimes r)_l,(L_1^n)_u]_uW_1+[(L_1^n)_u,(q\otimes
r)_l]W_1-[(L_1^n)_u,(q\otimes r)_l]_lW_1\\
&=&[(q\otimes r)_l,(L_1^n)_u]W_1+[(L_1^n)_u,(q\otimes r)_l]W_1=0.
\end{eqnarray*}
Note that $((L_1^n)_u)q\otimes r=(L_1^n)_u(q\otimes r)$, and
$q\otimes (L_1^n)_u^Tr=(q\otimes r)(L_1^n)_u$ are used in the third
identity. While $[A_u,B_u]_l=[A_l,B_l]_u=0$ is used in the fifth
identity.

Similarly,
$[\pa_\alpha,\pa_{x_n}]W_2=[\pa_\alpha,\pa_{y_n}]W_1=[\pa_\alpha,\pa_{y_n}]W_2=0$
can be proved.
\end{proof}
Define the following double expansions
\begin{eqnarray*}
Y_1(\lambda,\mu)&=&\sum_{m=0}^\infty
\frac{(\mu-\lambda)^m}{m!}\sum_{l=-\infty}^\infty\lambda^{-l-m-1}M_1^mL_1^{m+l},\\
Y_2(\lambda,\mu)&=&\sum_{m=0}^\infty
\frac{(\mu-\lambda)^m}{m!}\sum_{l=-\infty}^\infty\lambda^{-l-m-1}M_2^mL_2^{m+l},
\end{eqnarray*}
which can be viewed as the generator of the additional symmetries
for the Toda lattice hierarchy. This double expansions can be
related with the (adjoint) eigenfunctions in the following way,
\begin{proposition}\label{propaddisymmtoda2}

\begin{eqnarray}
Y_1(\lambda,\mu)&=&\lambda^{-1}\Psi_1(x,y;\mu)\otimes\Psi_1^*(x,y;\lambda),\label{generatortoda1}\\
Y_2(\lambda,\mu)&=&\lambda^{-1}\Psi_2(x,y;\mu^{-1})\otimes\Psi_2^*(x,y;\lambda^{-1}).\label{generatortoda2}
\end{eqnarray}
\end{proposition}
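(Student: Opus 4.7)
The plan is to evaluate each operator $M_1^m L_1^{m+l}$ (and analogously $M_2^m L_2^{m+l}$) as a residue involving a single wave function and a single adjoint wave function, and then to sum over $m$ and $l$ in closed form using the formal delta function and Taylor's theorem.

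For $Y_1(\lambda,\mu)$, I would first apply Lemma \ref{lemma} with the choice $U_1 = M_1^m L_1^{m+l} W_1$ and $V_1 = W_1^{-1}$. Since $U_1 V_1 = M_1^m L_1^{m+l}$, $U_1\chi(z) = M_1^m L_1^{m+l}\Psi_1(z)$ and $V_1^T \chi^*(z) = \Psi_1^*(z)$, identity (\ref{lemma1}) yields
\[
M_1^m L_1^{m+l} \;=\; \resi_z\,\frac{1}{z}\bigl(M_1^m L_1^{m+l}\Psi_1(z)\bigr)\otimes \Psi_1^*(z).
\]
Next I would compute $M_1^m L_1^{m+l}\Psi_1(z)$ using the eigenvector identities $L_1\Psi_1(z) = z\Psi_1(z)$ and $M_1\Psi_1(z) = \pa_z\Psi_1(z)$ supplied by (\ref{laxactonwave}) and (\ref{osproperty}). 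Because $M_1$ is a $z$-independent matrix, it commutes both with scalar multiplication by functions of $z$ and with $\pa_z$; a one-line induction then gives $M_1^m L_1^{m+l}\Psi_1(z) = z^{m+l}\pa_z^m\Psi_1(z)$.

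Substituting this back into the definition of $Y_1$ and interchanging the residue with the summations over $m$ and $l$, the scalar kernel factors as
\[
\sum_{l\in\mathbb Z}\lambda^{-l-1}z^l\;\cdot\;\sum_{m\ge 0}\frac{(\mu-\lambda)^m}{m!}\Bigl(\frac{z}{\lambda}\Bigr)^m\pa_z^m\Psi_1(z).
\]
The $l$-sum is exactly the formal delta function $\delta(\lambda,z)$ of (\ref{delta}); applying the standard identity $\resi_z\frac{1}{z}\delta(\lambda,z)f(z) = \lambda^{-1}f(\lambda)$ sets $z=\lambda$. The remaining sum is $\sum_{m\ge 0}\frac{(\mu-\lambda)^m}{m!}\pa_z^m\Psi_1(z)|_{z=\lambda}$, which is the Taylor expansion of $\Psi_1$ about $\lambda$ evaluated at $\mu$, and so collapses to $\Psi_1(\mu)$. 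Collecting the overall $\lambda^{-1}$ yields (\ref{generatortoda1}).

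The argument for $Y_2$ is structurally identical but based on part (\ref{lemma2}) of Lemma \ref{lemma}; with the parametrization $\chi(z^{-1})$ one has $L_2\Psi_2(z^{-1}) = z\Psi_2(z^{-1})$ and $M_2\Psi_2(z^{-1}) = \pa_{z^{-1}}\Psi_2(z^{-1})$, and the same Taylor-plus-delta manipulation produces (\ref{generatortoda2}). The main obstacle is really just the identity $M_1^m L_1^{m+l}\Psi_1(z) = z^{m+l}\pa_z^m\Psi_1(z)$: since $[L_1,M_1]=1$, one cannot naively commute $L_1$ past $M_1$, but what makes the formula go through is that after $L_1^{m+l}$ has been absorbed as the scalar $z^{m+l}$, each further application of $M_1$ passes cleanly through both that scalar factor and through $\pa_z$, because $M_1$ itself does not depend on $z$. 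Once this is accepted, the remainder of the proof is routine formal-series bookkeeping.
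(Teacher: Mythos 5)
Your proof is correct and is essentially the paper's own argument: the paper likewise uses Lemma \ref{lemma} to write $M_1^mL_1^{m+l}={\rm res}_z\,z^{-1}\bigl(z^{m+l}\pa_z^m\Psi_1(x,y;z)\bigr)\otimes\Psi_1^*(x,y;z)$ (it conjugates $L_1^{m+l}=W_1\Lambda^{m+l}W_1^{-1}$ and uses $\Lambda\chi(z)=z\chi(z)$, which is the same step as your $L_1\Psi_1=z\Psi_1$), and then collapses the $l$-sum into $\delta(\lambda,z)$ and the $m$-sum into the Taylor shift $e^{(\mu-\lambda)\pa_\lambda}$, exactly as you do. The one caveat is in the $Y_2$ case: the identity you need (and the one the paper uses) is $M_2\Psi_2(x,y;z^{-1})=\pa_z\bigl(\Psi_2(x,y;z^{-1})\bigr)$, i.e.\ the $z$-derivative of the composite function, not $\pa_{z^{-1}}\Psi_2(z^{-1})$ as you wrote --- with the latter reading the Taylor sum would shift the argument to $\lambda^{-1}+\mu-\lambda$ rather than to $\mu^{-1}$, so you should make the chain-rule step explicit there.
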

\begin{proof}
Firstly, according to Lemma \ref{lemma}, one has
\begin{eqnarray*}
M_1^mL_1^{m+l}&=& M_1^mW_1\Lambda^{m+l}W_1^{-1} \\
&=&{\rm res}_{z}z^{-1}\big(M_1^mW_1\Lambda^{m+l}\chi(z)\big)\otimes\big((W_1^{-1})^T\chi^*(z)\big)\\
&=&{\rm
res}_{z}\big(z^{-1+m+l}\pa_z^m\Psi_1(x,y;z)\big)\otimes\Psi_1^*(x,y;z),
\end{eqnarray*}
then,
\begin{eqnarray*}
Y_1(\lambda,\mu)&=&{\rm res}_{z}\sum_{m=0}^\infty \sum_{l=-\infty}^\infty\frac{z^{m+l}}{\lambda^{l+m+1}}\frac{(\mu-\lambda)^m}{m!}\big(z^{-1}\pa_z^m\Psi_1(x,y;z)\big)\otimes\Psi_1^*(x,y;z)\\
&=&{\rm res}_{z}\delta(\lambda,z)z^{-1}e^{(\mu-\lambda)\pa_z}\Psi_1(x,y;z)\otimes\Psi_1^*(x,y;z)\\
&=&\lambda^{-1}e^{(\mu-\lambda)\pa_\lambda}\Psi_1(x,y;\lambda)\otimes\Psi_1^*(x,y;\lambda)\\
&=&\lambda^{-1}\Psi_1(x,y;\mu)\otimes\Psi_1^*(x,y;\lambda),
\end{eqnarray*}
where ${\rm res}_z(\delta(\lambda,z)f(z))=f(\lambda)$ is used.

Thus (\ref{generatortoda1}) are proved.

Similarly for (\ref{generatortoda2}),
\begin{eqnarray*}
M_2^mL_2^{m+l}&=& M_2^mW_2\Lambda^{-m-l}W_2^{-1} \\
&=&{\rm res}_{z}z^{-1}\big(M_2^mW_2\Lambda^{-m-l}\chi(z^{-1})\big)\otimes\big((W_2^{-1})^T\chi^*(z^{-1})\big)\\
&=&{\rm
res}_{z}\big(z^{-1+m+l}\pa_z^m\Psi_2(x,y;z^{-1})\big)\otimes\Psi_2^*(x,y;z^{-1}),
\end{eqnarray*}
then
\begin{eqnarray*}
Y_2(\lambda,\mu)&=&res_{z}\sum_{m=0}^\infty \sum_{l=-\infty}^\infty\frac{z^{m+l}}{\lambda^{l+m+1}}\frac{(\mu-\lambda)^m}{m!}\big(z^{-1}\pa_z^m\Psi_2(x,y;z^{-1})\big)\otimes\Psi_2^*(x,y;z^{-1})\\
&=&res_{z}\delta(\lambda,z)z^{-1}e^{(\mu-\lambda)\pa_z}\Psi_2(x,y;z^{-1})\otimes\Psi_2^*(x,y;z^{-1})\\
&=&\lambda^{-1}e^{(\mu-\lambda)\pa_\lambda}\Psi_2(x,y;\lambda^{-1})\otimes\Psi_2^*(x,y;\lambda^{-1})\\
&=&\lambda^{-1}\Psi_2(x,y;\mu^{-1})\otimes\Psi_2^*(x,y;\lambda^{-1}).
\end{eqnarray*}
\end{proof}
If define $\pa_{\alpha_1}$ and $\pa_{\alpha_2}$ flows,
\begin{eqnarray}
\pa_{\alpha_1}
W_1&=&(\lambda^{-1}\Psi_1(x,y;\mu)\otimes\Psi_1^*(x,y;\lambda))_l
W_1,\\
\pa_{\alpha_1}
W_2&=&-(\lambda^{-1}\Psi_1(x,y;\mu)\otimes\Psi_1^*(x,y;\lambda))_u
W_2,
\end{eqnarray}
and
\begin{eqnarray}
\pa_{\alpha_2}
W_1&=&-(\lambda^{-1}\Psi_2(x,y;\mu^{-1})\otimes\Psi_2^*(x,y;\lambda^{-1}))_l
W_1,\\
\pa_{\alpha_2}
W_2&=&(\lambda^{-1}\Psi_2(x,y;\mu^{-1})\otimes\Psi_2^*(x,y;\lambda^{-1}))_u
W_2,
\end{eqnarray}
one can find that $\pa_{\alpha_1}$ and $\pa_{\alpha_2}$ are the
squared eigenfunction symmetries generated by the pair of
$\Psi_1(x,y;\mu)$ \& $\lambda^{-1}\Psi_1^*(x,y;\lambda)$ and the
pair of $-\Psi_2(x,y;\mu^{-1})$ \&
$\lambda^{-1}\Psi_2^*(x,y;\lambda^{-1})$ respectively.

Further from (\ref{generatortoda1}) and (\ref{generatortoda2}), it
is can be known that the squared eigenfunction symmetries
$\pa_{\alpha_1}$ and $\pa_{\alpha_2}$ are the generators of the
additional symmetries for the Toda lattice hierarchy, that is,
\begin{eqnarray*}
\pa_{\alpha_1}&=&\sum_{m=0}^\infty
\frac{(\mu-\lambda)^m}{m!}\sum_{k=-\infty}^\infty\lambda^{-k-m-1}\pa_{x_{m,m+k}^*},\\
\pa_{\alpha_2}&=&\sum_{m=0}^\infty
\frac{(\mu-\lambda)^m}{m!}\sum_{k=-\infty}^\infty\lambda^{-k-m-1}\pa_{y_{m,m+k}^*}.
\end{eqnarray*}

In the rest of this paper, we will mainly study this two particular
squared eigenfunction symmetries $\pa_{\alpha_1}$ and
$\pa_{\alpha_2}$, and show their actions on the tau function of the
Toda lattice hierarchy.

\section{Fay-like Identities and some Important Relations about the Wave Functions}
In this section, the Fay-like identities for the Toda lattice
hierarchy are reviewed and some important relations about the wave
functions are derived, which is helpful for the research on the
squared eigenfunction symmetries.

The general Fay-like identities are investigated in \cite{AvM99}. In
this paper, only some particular cases are needed. The starting
point of Fay-like identities is the bilinear identity of the Toda
lattice hierarchy in terms of the tau function, which can be derived
by substituting (\ref{tauexpression}) into the bilinear identity
(\ref{wavebilinearcomponent})
\begin{eqnarray}
 &&{\rm res}_{z}\Big(\tau(s;x-[z^{-1}],y)\tau(s'+1;x'+[z^{-1}],y')z^{s-s'-1}e^{\xi(x,z)-\xi(x',z)}\Big)\nonumber\\
 &=&{\rm res}_{z}\Big(\tau(s+1;x,y-[z^{-1}])\tau(s';x',y'+[z^{-1}])z^{s'-s-1}e^{\xi(y,z)-\xi(y',z)}\Big).\label{taubilinear}
\end{eqnarray}
Starting from (\ref{taubilinear}), the first group of particular
Fay-like identities, showed in the following lemma, which also
appears in \cite{teo2006}, can be easily obtained with the help of
the following identities,
\begin{eqnarray}
&&\frac{1}{(1-zs_1)(1-zs_2)}=\Big(\frac{1}{(1-zs_2)}-\frac{1}{(1-zs_1)}\Big)\frac{1}{z(s_2-s_1)},\label{usefulformula1}\\
&&{\rm res}_{z}\Big(\Big(\sum_{n=-\infty}^\infty
a_n(\zeta)z^{-n}\Big)\frac{1}{1-z/\zeta}\Big)
=\zeta\Big(\sum_{n=1}^\infty
a_n(\zeta)z^{-n}\Big)\Big|_{z=\zeta}.\label{usefulformula2}
\end{eqnarray}
\begin{lemma}\label{lemmafay2}
\begin{eqnarray}
(1)&&s_1\tau(n;x-[s_1],y)\tau(n+1;x-[s_2],y)-
s_2\tau(n;x-[s_2],y)\tau(n+1;x-[s_1],y)\nonumber\\
&=&(s_1-s_2)\tau(n+1;x,y)\tau(n;x-[s_1]-[s_2],y),\label{fayxx2}\\
(2)&&s_1\tau(n+1;x,y-[s_1])\tau(n;x,y-[s_2])-s_2\tau(n+1;x,y-[s_2])\tau(n;x,y-[s_1])\nonumber\\
&=&(s_1-s_2)\tau(n;x,y)\tau(n+1;x,y-[s_1]-[s_2]),\label{fayyy2}\\
(3)&&\tau(n;x-[s_1],y)\tau(n;x,y-[s_2])-\tau(n;x,y)\tau(n;x-[s_1],y-[s_2])\nonumber\\
&=&s_1s_2\tau(n-1;x-[s_1],y)\tau(n+1;x,y-[s_2]).\label{fayxy2}
\end{eqnarray}
\end{lemma}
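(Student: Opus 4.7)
The plan is to derive all three identities from the tau-function bilinear identity \eqref{taubilinear} by judicious specialization of the free variables $(s,s',x,x',y,y')$, so that each side collapses to a residue that can be evaluated by means of the auxiliary identities \eqref{usefulformula1} and \eqref{usefulformula2}. The key idea is to choose the shifts $x-x'$ and $y-y'$ to be sums of terms of the form $[s_i]$, so that the exponential factors $e^{\xi(x,z)-\xi(x',z)}$ and $e^{\xi(y,z)-\xi(y',z)}$ become simple rational functions of $z$, either $1/(1-s_iz)$ or $1/((1-s_1z)(1-s_2z))$, to which \eqref{usefulformula1} and \eqref{usefulformula2} apply directly.

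For identity (1), I would set $y=y'$, $s=s'=n$, and $x'=x-[s_1]-[s_2]$. This trivializes the $y$-exponential, turns the $x$-exponential into $1/((1-s_1z)(1-s_2z))$, and leaves a $z^{-1}$ prefactor on both sides of \eqref{taubilinear}. The right-hand side then becomes the residue of $z^{-1}$ times an analytic series in $z^{-1}$, which simply picks out its constant coefficient $\tau(n+1;x,y)\tau(n;x-[s_1]-[s_2],y)$. On the left-hand side, splitting the rational factor via the partial fractions in \eqref{usefulformula1} and evaluating each of the two resulting residues with \eqref{usefulformula2} at $z=1/s_1$ and $z=1/s_2$ gives, after clearing the factor $s_1-s_2$, exactly the left-hand side of (1). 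Identity (2) follows by the mirror-image substitution $x=x'$, $y'=y-[s_1]-[s_2]$, $s=s'=n$, with the roles of the two sides of \eqref{taubilinear} interchanged.

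For identity (3), which mixes $x$- and $y$-shifts and involves the neighbouring indices $n\pm 1$, I would take $s=n-1$, $s'=n$, $x'=x-[s_1]$, and $y'=y-[s_2]$. Both exponential factors are now single poles $1/(1-s_iz)$, but the prefactor $z^{s-s'-1}$ becomes $z^{-2}$ on the left, while $z^{s'-s-1}$ becomes $z^{0}$ on the right. On the left, since the tau-product series starts at $z^0$, the extra $z^{-2}$ pushes the integrand's expansion to start at $z^{-2}$, and a single application of \eqref{usefulformula2} at $\zeta=1/s_1$ yields $s_1\tau(n-1;x-[s_1],y)\tau(n+1;x,y-[s_2])$. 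On the right the integrand itself starts at $z^0$, so \eqref{usefulformula2} produces two contributions: an evaluation at $\zeta=1/s_2$ and a subtraction of the zero-order constant $\tau(n;x,y)\tau(n;x-[s_1],y-[s_2])$. Equating both sides and multiplying through by $s_2$ yields (3).

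The step I expect to require the most care is the bookkeeping of the starting order of the tau-product expansions in $z^{-1}$, since \eqref{usefulformula2} only sums the coefficients with index $n\geq 1$. After each substitution the combined series $\tau(\cdot;x-[z^{-1}],\cdot)\tau(\cdot;x'+[z^{-1}],\cdot)$ starts at $z^{0}$, and the extra factor $z^{s-s'-1}$ shifts this by $0$, $-1$, or $-2$; one has to decide whether the would-be $z^{0}$ coefficient is absorbed into the evaluation at $z=1/s_i$ (when the integrand starts at $z^{-1}$ or lower) or must be retained as a subtracted constant (when it starts at $z^0$). Getting this right in the asymmetric identity (3) is the delicate point; the remainder of the argument is formal manipulation with geometric series.
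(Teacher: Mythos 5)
Your proposal is correct and follows exactly the route the paper itself indicates (the paper only sketches this lemma, saying it follows from \eqref{taubilinear} via \eqref{usefulformula1}--\eqref{usefulformula2} and citing Teo): your specializations $x'=x-[s_1]-[s_2]$, $y'=y$, $s=s'=n$ for (1), the mirror choice for (2), and $s'=s+1=n$, $x'=x-[s_1]$, $y'=y-[s_2]$ for (3) all check out, including the careful treatment of whether the $z^0$ coefficient is subtracted or absorbed. Nothing is missing.
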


Another group of particular Fay-like identities can also be obtained
starting from (\ref{taubilinear}). In fact, by considering the
following four cases of (\ref{taubilinear}):

\textbf{Case I}: $s'=s=n-1$, $x'=x-[s_1]-[s_2]-[s_3]$, $y'=y$,

\textbf{Case II}: $s'=s=n$, $x'=x-[s_1]-[s_2]$, $y'=y-[s_3]$,

\textbf{Case III}: $s'=s+1=n+1$, $x'=x-[s_3]$, $y'=y-[s_1]-[s_2]$,

\textbf{Case IV}: $s'=s+2=n+1$, $x'=x$, $y'=y-[s_1]-[s_2]-[s_3]$,\\
and using (\ref{usefulformula1}) and (\ref{usefulformula2}), one can
obtain the following lemma.
\begin{lemma}\label{lemmafay3}
\begin{eqnarray}
(1)&&s_1(s_2-s_3)\tau(n;x-[s_1],y)\tau(n;x-[s_2]-[s_3],y)\nonumber\\
&+&s_2(s_3-s_1)\tau(n;x-[s_2],y)\tau(n;x-[s_3]-[s_1],y)\nonumber\\
&+&s_3(s_1-s_2)\tau(n;x-[s_3],y)\tau(n;x-[s_1]-[s_2],y)=0,\label{fayxx3}\\
(2)&&s_1\tau(n;x-[s_1],y)\tau(n+1;x-[s_2],y-[s_3])\nonumber\\
&-&s_2\tau(n;x-[s_2],y)\tau(n+1;x-[s_1],y-[s_3])\nonumber\\
&=&(s_1-s_2)\tau(n;x-[s_1]-[s_2],y)\tau(n+1;x,y-[s_3]),\label{fayxy3}\\
(3)&&(s_1-s_2)s_3\tau(n+1;x,y-[s_1]-[s_2])\tau(n-1;x-[s_3],y)\nonumber\\
&=&\tau(n;x,y-[s_1])\tau(n;x-[s_3],y-[s_2])-\tau(n;x,y-[s_2])\tau(n;x-[s_3],y-[s_1]),\label{fayyx3}\\
(4)&&(s_2-s_3)\tau(n;x,y-[s_1])\tau(n+1;x,y-[s_2]-[s_3])\nonumber\\
&+&(s_3-s_1)\tau(n;x,y-[s_2])\tau(n+1;x,y-[s_3]-[s_1])\nonumber\\
&+&(s_1-s_2)\tau(n;x,y-[s_3])\tau(n+1;x,y-[s_1]-[s_2])=0.\label{fayyy3}
\end{eqnarray}
\end{lemma}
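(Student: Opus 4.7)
The plan is to derive each of the four identities in Lemma \ref{lemmafay3} by substituting the specified values of $(s,s',x',y')$ into the tau-function bilinear identity (\ref{taubilinear}) and evaluating both residues explicitly, in exactly the same spirit as the proof of Lemma \ref{lemmafay2}.

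The first step, in each case, is to simplify the exponential prefactors using $e^{\xi([s],z)}=(1-sz)^{-1}$. This converts $e^{\xi(x,z)-\xi(x',z)}$ and $e^{\xi(y,z)-\xi(y',z)}$ into products of simple rational factors, with the distribution determined by the case at hand: Case I gives $\prod_{i=1}^{3}(1-s_iz)^{-1}$ on the $x$-side with trivial $y$-side; Case II splits as $\prod_{i=1}^{2}(1-s_iz)^{-1}$ on $x$ and $(1-s_3z)^{-1}$ on $y$; Case III is the $y$-mirror of Case II; and Case IV is the $y$-analogue of Case I. The second step is to decompose each such product into a sum of simple fractions $(1-s_iz)^{-1}$ via iterated application of (\ref{usefulformula1}), and the third is to apply (\ref{usefulformula2}) to each resulting single-factor residue. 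The substitution $z=1/s_i$ in the tau function arguments turns $[z^{-1}]$ into $[s_i]$, which cancels one of the shifts $[s_1],[s_2],[s_3]$ and produces exactly the expressions of the form $\tau(\cdot;x-[s_j]-[s_k],\cdot)$ (or the $y$-counterpart) that appear on the left-hand sides of (\ref{fayxx3})--(\ref{fayyy3}).

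For Cases II and III only one side of (\ref{taubilinear}) requires a multi-term partial-fraction decomposition (the other side contributing a single factor), so after clearing the factor $(s_1-s_2)$ the mixed-level identities (\ref{fayxy3}) and (\ref{fayyx3}) follow directly in analogy with the derivation of (\ref{fayxx2}) and (\ref{fayyy2}). Cases I and IV are more involved: both sides collapse to sums of three residue evaluations, and the resulting equation must be multiplied by the common antisymmetric denominator $(s_1-s_2)(s_2-s_3)(s_3-s_1)$ and rearranged into the cyclic three-term form appearing in (\ref{fayxx3}) and (\ref{fayyy3}).

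The main obstacle is the algebraic bookkeeping in Cases I and IV: one has to track the signs produced by the antisymmetric products $(s_i-s_j)$, verify that the three evaluations at $z=1/s_i$ combine into the claimed cyclic sum, and observe that the contribution from the opposite side of (\ref{taubilinear}) vanishes in precisely the way needed for the stated right-hand side to be zero (this vanishing is essentially a consequence of the Lagrange-interpolation identity $\sum_i s_i/\bigl((s_i-s_j)(s_i-s_k)\bigr)=0$, which follows from the residue-at-infinity computation for $s/\prod_i(s-s_i)$). No new analytic input beyond (\ref{taubilinear}), (\ref{usefulformula1}) and (\ref{usefulformula2}) is required; the derivation is a direct, though careful, extension of the argument used in Lemma \ref{lemmafay2}.
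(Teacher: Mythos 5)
Your proposal is correct and follows exactly the route the paper itself indicates (its entire proof is the instruction to substitute Cases I--IV into (\ref{taubilinear}), convert the exponential prefactors into products of $(1-s_iz)^{-1}$, partial-fraction via (\ref{usefulformula1}), and evaluate each piece with (\ref{usefulformula2})). The only small misplacement is in Cases I and IV: the ``opposite'' side of (\ref{taubilinear}) vanishes for the trivial degree reason that its integrand is $z^{-2}$ (resp.\ $z^{-3}$) times a power series in $z^{-1}$, whereas the Lagrange-interpolation identities $\sum_i s_i^k/\prod_{j\neq i}(s_i-s_j)=0$ ($k=0,1$) are needed instead to cancel the constant-term corrections that (\ref{usefulformula2}) generates on the nontrivial side --- all ingredients you already have in hand.
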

Next with the help of the two groups of particular Fay-like
identities, one can get the following relations about the wave
functions for the Toda lattice hierarchy.
\begin{lemma}\label{lmm1}
\begin{eqnarray}
\Psi_1^*(n;x,y;\lambda)\Psi_1(n;x,y;z)&=&z^{-1}(\Psi_1(n+1;x,y;z)\Psi_1^*(n;x-[z^{-1}],y;\lambda)\nonumber\\
&&-\Psi_1(n;x,y;z)\Psi_1^*(n-1;x-[z^{-1}],y;\lambda)),\label{wave11}\\
\Psi_1^*(n;x,y;\lambda)\Psi_2(n;x,y;z)&=&\Psi_2(n;x,y;z)\Psi_1^*(n;x,y-[z];\lambda)\nonumber\\
&&-\Psi_2(n+1;x,y;z)\Psi_1^*(n+1;x,y-[z];\lambda),\label{wave12}\\
\Psi_2^*(n;x,y;\lambda^{-1})\Psi_1(n;x,y;z)&=&z^{-1}(\Psi_1(n+1;x,y;z)\Psi_2^*(n;x-[z^{-1}],y;\lambda^{-1})\nonumber\\
&&-\Psi_1(n;x,y;z)\Psi_2^*(n-1;x-[z^{-1}],y;\lambda^{-1})),\label{wave21}\\
\Psi_2^*(n;x,y;\lambda^{-1})\Psi_2(n;x,y;z)&=&\Psi_2(n;x,y;z)\Psi_2^*(n;x,y-[z];\lambda^{-1})\nonumber\\
&&-\Psi_2(n+1;x,y;z)\Psi_2^*(n+1;x,y-[z];\lambda^{-1}).\label{wave22}
\end{eqnarray}

\end{lemma}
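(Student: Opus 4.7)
The plan is a direct verification. For each identity in Lemma \ref{lmm1}, I would substitute the tau-function formulas (\ref{tauexpression}) for the wave functions on both sides, cancel the common exponential factors and powers of $z$ and $\lambda$, and then match the resulting bilinear identity in tau functions with a suitably shifted instance of one of the Fay-like identities from Lemma \ref{lemmafay2}.

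The key auxiliary computation I need is the following: since $\xi([s^{-1}],t)=-\log(1-t/s)$, one has
\begin{equation*}
e^{-\xi(x-[z^{-1}],\lambda)}=e^{-\xi(x,\lambda)}\,\frac{1}{1-\lambda/z},\qquad e^{-\xi(y-[z],\lambda)}=e^{-\xi(y,\lambda)}\,\frac{1}{1-z\lambda}.
\end{equation*}
After clearing denominators, the extra factor $(1-\lambda/z)$ (respectively $(1-z\lambda)$) produced on the right-hand side is precisely what converts the difference of two tau-function products into the Fay-like combination appearing on the left.

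Concretely, I expect the following matchings after substitution and cancellation. Identity (\ref{wave11}) reduces to (\ref{fayxx2}) after the shift $x\mapsto x+[\lambda^{-1}]$ with $s_1=\lambda^{-1}$, $s_2=z^{-1}$. Identity (\ref{wave12}) reduces to (\ref{fayxy2}) after $x\mapsto x+[\lambda^{-1}]$, $n\mapsto n+1$ with $s_1=\lambda^{-1}$, $s_2=z$. Identity (\ref{wave21}) again reduces to (\ref{fayxy2}), but now after $y\mapsto y+[\lambda^{-1}]$ with $s_1=z^{-1}$, $s_2=\lambda^{-1}$. Finally (\ref{wave22}) reduces to (\ref{fayyy2}) after $y\mapsto y+[\lambda^{-1}]$ with $s_1=\lambda^{-1}$, $s_2=z$. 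Notably, all four identities will fall under Lemma \ref{lemmafay2}, and the three-tuple identities of Lemma \ref{lemmafay3} are not needed here.

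The main obstacle is purely bookkeeping: there is no conceptual novelty once the Fay-like identities are in hand, but one must carefully track the signs, the arguments of $\xi$, and the $n$--shifts that arise from the asymmetry $\Psi_1(s;\cdot)\sim\tau(s;\cdot)$ versus $\Psi_1^{*}(s;\cdot)\sim\tau(s+1;\cdot)$ (and similarly for the $\Psi_2$ side). A mismatch in either the index shift or the correct identification of the pole factor $(1-\lambda/z)$ versus its dual $(1-z\lambda)$ would appear to give an inconsistent identity, so the computation has to be executed uniformly across all four cases.
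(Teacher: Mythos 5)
Your proposal is correct and follows essentially the same route as the paper: substitute the tau-function expressions (\ref{tauexpression}), cancel the exponential and power factors using $\xi([s^{-1}],t)=-\log(1-t/s)$, and invoke the two-variable Fay identities of Lemma \ref{lemmafay2} with exactly the shifts and parameter choices you list (your assignment for (\ref{wave11}) differs from the paper's only by the harmless swap $s_1\leftrightarrow s_2$, under which (\ref{fayxx2}) is invariant). Your observation that Lemma \ref{lemmafay3} is not needed here is also consistent with the paper, which reserves it for Lemma \ref{lmm2}.
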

\begin{proof}Firstly, according to (\ref{tauexpression}) and (\ref{fayxx2}), one can get
\begin{eqnarray*}
&&\Psi_1^*(n;x,y;\lambda)\Psi_1(n;x,y;z)\\
&=&e^{\xi(x,z)-\xi(x,\lambda)}\Big(\frac{z}{\lambda}\Big)^n
\frac{\tau(n+1;x+[\lambda^{-1}],y)\tau(n;x-[z^{-1}],y)}{\tau(n+1;x,y)\tau(n;x,y)}\\
&=&e^{\xi(x,z)-\xi(x,\lambda)}\Big(\frac{z}{\lambda}\Big)^n\Big(\frac{z^{-1}}{z^{-1}-\lambda^{-1}}\frac{\tau(n;x+[\lambda^{-1}]-[z^{-1}],y)}{\tau(n;x,y)}\\
&&-\frac{\lambda^{-1}}{z^{-1}-\lambda^{-1}}\frac{\tau(n+1;x+[\lambda^{-1}]-[z^{-1}],y)}{\tau(n+1;x,y)}\Big)\\
&=&z^{-1}(\Psi_1(n+1;x,y;z)\Psi_1^*(n;x-[z^{-1}],y;\lambda)-\Psi_1(n;x,y;z)\Psi_1^*(n-1;x-[z^{-1}],y;\lambda)),
\end{eqnarray*}
and thus (\ref{wave11}) is obtained.

Then, similarly (\ref{tauexpression}) and (\ref{fayxy2}) leads to
(\ref{wave12}) and (\ref{wave21}), that is,
\begin{eqnarray*}
&&\Psi_1^*(n;x,y;\lambda)\Psi_2(n;x,y;z)\\
&=&\Big(\frac{z}{\lambda}\Big)^ne^{-\xi(x,\lambda)+\xi(y,z^{-1})}\frac{\tau(n+1;x+[\lambda^{-1}],y)\tau(n+1;x,y-[z])}{\tau(n+1;x,y)\tau(n;x,y)}\\
&=&\Big(\frac{z}{\lambda}\Big)^ne^{-\xi(x,\lambda)+\xi(y,z^{-1})}\Big(\frac{\tau(n+1;x+[\lambda^{-1}],y-[z])}{\tau(n;x,y)}-\frac{z}{\lambda}\cdot\frac{\tau(n+2;x+[\lambda^{-1}],y-[z])}{\tau(n+1;x,y)}\Big)\\
&=&\Psi_2(n;x,y;z)\Psi_1^*(n;x,y-[z];\lambda)-\Psi_2(n+1;x,y;z)\Psi_1^*(n+1;x,y-[z];\lambda),
\end{eqnarray*}
and
\begin{eqnarray*}
&&\Psi_2^*(n;x,y;\lambda^{-1})\Psi_1(n;x,y;z)\\
&=&e^{\xi(x,z)-\xi(y,\lambda)}(\lambda z)^{n}\frac{\tau(n;x,y+[\lambda^{-1}])\tau(n;x-[z^{-1}],y)}{\tau(n+1;x,y)\tau(n;x,y)}\\
&=&e^{\xi(x,z)-\xi(y,\lambda)}(\lambda z)^{n}\Big(\frac{\tau(n;x-[z^{-1}],y+[\lambda^{-1}])}{\tau(n+1;x,y)}-(\lambda z)^{-1}\frac{\tau(n-1;x-[z^{-1}],y+[\lambda^{-1}])}{\tau(n;x,y)}\Big)\\
&=&z^{-1}(\Psi_1(n+1;x,y;z)\Psi_2^*(n;x-[z^{-1}],y;\lambda)-\Psi_1(n;x,y;z)\Psi_2^*(n-1;x-[z^{-1}],y;\lambda)).
\end{eqnarray*}
At last, by using (\ref{tauexpression}) and (\ref{fayyy2}), one can
obtain (\ref{wave22}).
\begin{eqnarray*}
&&\Psi_2^*(n;x,y;\lambda^{-1})\Psi_2(n;x,y;z)\\
&=&e^{\xi(y,z^{-1})-\xi(y,\lambda)}(\lambda z)^n\frac{\tau(n;x,y+[\lambda^{-1}])\tau(n+1;x,y-[z])}{\tau(n+1;x,y)\tau(n;x,y)}\\
&=&\lambda^{-1}(\lambda^{-1}-z)^{-1}e^{\xi(y,z^{-1})-\xi(y,\lambda)}(\lambda z)^n\Big(\frac{\tau(n;x,y+[\lambda^{-1}]-[z])}{\tau(n;x,y)}-(\lambda z)\cdot\frac{\tau(n+1;x,y+[\lambda^{-1}]-[z])}{\tau(n+1;x,y)}\Big)\\
&=&\Psi_2(n;x,y;z)\Psi_2^*(n;x,y-[z];\lambda^{-1})-\Psi_2(n+1;x,y;z)\Psi_2^*(n+1;x,y-[z];\lambda^{-1}).
\end{eqnarray*}
\end{proof}
From (\ref{wave11})-(\ref{wave22}), one can further get
\begin{eqnarray}
&&\sum_{k<0}\Psi_1^*(n+k;x,y;\lambda)\Psi_1(n+k;x,y;z)
=z^{-1}\Psi_1^*(n-1;x-[z^{-1}],y;\lambda)\Psi_1(n;x,y;z),\label{sumwave11}\\
&&\sum_{k\geq0}\Psi_1^*(n+k;x,y;\lambda)\Psi_2(n+k;x,y;z)
=\Psi_2(n;x,y;z)\Psi_1^*(n;x,y-[z];\lambda)\label{sumwave12},\\
&&\sum_{k<0}\Psi_2^*(n+k;x,y;\lambda^{-1})\Psi_1(n+k;x,y;z)
=z^{-1}\Psi_1(n;x,y;z)\Psi_2^*(n-1;x-[z^{-1}],y;\lambda^{-1}),\label{sumwave21}\\
&&\sum_{k\geq0}\Psi_2^*(n+k;x,y;\lambda^{-1})\Psi_2(n+k;x,y;z)
=\Psi_2(n;x,y;z)\Psi_2^*(n;x,y-[z];\lambda^{-1}).\label{sumwave22}
\end{eqnarray}

If introduce the following two notations,
\begin{equation}\label{shiftoperator}
    G_1(\xi)f(x,y;z)=f(x-[\xi^{-1}],y;z),\quad
    G_2(\xi)f(x,y;z)=f(x,y-[\xi];z),
\end{equation}
one can further have another group of relations about the wave
functions, which is the lemma below.
\begin{lemma}\label{lmm2}

\begin{eqnarray}
&&(G_1(z)-1)\Psi_1(n;x,y;\mu)\Psi_1^*(n-1;x-[\mu^{-1}],y;\lambda)\nonumber\\
&=&-\frac{\mu}{z}\Psi_1(n;x,y;\mu)\Psi_1^*(n-1;x-[z^{-1}],y;\lambda),\label{shiftwave11}\\
&&(\Lambda G_2(z)-1)\Psi_1(n;x,y;\mu)\Psi_1^*(n-1;x-[\mu^{-1}],y;\lambda)\nonumber\\
&=&\mu\Psi_1(n;x,y;\mu)\Psi_1^*(n;x,y-[z];\lambda),\label{shiftwave12}\\
&&(G_1(z)-1)\Psi_2(n;x,y;\mu^{-1})\Psi_2^*(n;x,y-[\mu^{-1}];\lambda^{-1})\nonumber\\
&=&z^{-1}\Psi_2(n;x,y;\mu^{-1})\Psi_2^*(n-1;x-[z^{-1}],y;\lambda^{-1}),\label{shiftwave21}\\
&&(\Lambda
G_2(z)-1)\Psi_2(n;x,y;\mu^{-1})\Psi_2^*(n;x,y-[\mu^{-1}];\lambda^{-1})\nonumber\\
&=&-\Psi_2(n;x,y;\mu^{-1})\Psi_2^*(n;x,y-[z];\lambda^{-1}).\label{shiftwave22}
\end{eqnarray}

\end{lemma}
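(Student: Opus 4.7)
My plan is to reduce each of the four identities (\ref{shiftwave11})--(\ref{shiftwave22}) to one of the three-parameter Fay-like identities in Lemma \ref{lemmafay3}, following the same pattern used in the proof of Lemma \ref{lmm1}.

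The first step is to substitute the tau-function expressions (\ref{tauexpression}) for every (adjoint) wave function appearing on both sides of the identity. The shift operators act transparently on the tau factors: $G_1(z)$ replaces $x$ by $x-[z^{-1}]$ everywhere, while $\Lambda G_2(z)$ simultaneously shifts $n\mapsto n+1$ and $y\mapsto y-[z]$. The second step is to simplify the exponential prefactors using the elementary identities
\[
e^{\xi([\zeta^{-1}],s)}=\frac{1}{1-s/\zeta},\qquad e^{\xi([\zeta],s)}=\frac{1}{1-s\zeta},
\]
both of which follow from $\xi([w],s)=-\ln(1-sw)$. This turns, for example, $e^{\xi(x-[z^{-1}],\mu)}$ into $(1-\mu/z)e^{\xi(x,\mu)}$ and similarly handles every shifted exponential in $\Psi_1,\Psi_2,\Psi_1^*,\Psi_2^*$. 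After cancelling the common exponential and the common power-of-$\lambda\mu$ factors from both sides, each equation becomes a rational identity between ratios of three tau functions at shifted arguments.

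The third step is to match this rational identity with the appropriate Fay identity after clearing denominators. For (\ref{shiftwave11}), after the change of variable $x\mapsto x+[\lambda^{-1}]$ the resulting three-tau identity
\begin{align*}
&(z-\mu)\,\tau(n;x-[\lambda^{-1}],y)\,\tau(n;x-[z^{-1}]-[\mu^{-1}],y)\\
&\quad{}+(\mu-\lambda)\,\tau(n;x-[z^{-1}],y)\,\tau(n;x-[\mu^{-1}]-[\lambda^{-1}],y)\\
&\quad{}=(z-\lambda)\,\tau(n;x-[\mu^{-1}],y)\,\tau(n;x-[\lambda^{-1}]-[z^{-1}],y)
\end{align*}
is precisely (\ref{fayxx3}) with $(s_1,s_2,s_3)=(\lambda^{-1},z^{-1},\mu^{-1})$ after multiplication by $-\lambda z\mu$ and rearrangement. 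By the same mechanism, (\ref{shiftwave12}) should reduce to (\ref{fayxy3}) with $(s_1,s_2,s_3)=(\lambda^{-1},\mu^{-1},z)$, while (\ref{shiftwave21}) and (\ref{shiftwave22}) should reduce to (\ref{fayyx3}) and (\ref{fayyy3}) respectively, with the two spectral parameters $\lambda,\mu$ and the shift parameter $z$ filling the three Fay slots in each case.

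The main obstacle is organizational rather than conceptual: each identity produces a product of three distinct tau ratios together with two exponential prefactors, so one must keep track carefully of the $s$-index shift inside $\Psi_1^*$ (which sits at $s=n-1$ rather than $s=n$ in (\ref{shiftwave11})--(\ref{shiftwave12})) together with the simultaneous $n$- and $y$-shifts produced by $\Lambda G_2(z)$ in (\ref{shiftwave12}) and (\ref{shiftwave22}). Once all the shifts are aligned and the exponential factors have been reduced to simple rational expressions in $\lambda,\mu,z$, the identification with the corresponding Fay identity in each case amounts to a direct algebraic rearrangement.
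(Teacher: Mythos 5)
Your proposal is correct and follows essentially the same route as the paper: substitute the tau-function expressions (\ref{tauexpression}), reduce the shifted exponentials to rational prefactors via $e^{\xi([w],s)}=(1-ws)^{-1}$, and identify the resulting three-tau relation with (\ref{fayxx3}), (\ref{fayxy3}), (\ref{fayyx3}), (\ref{fayyy3}) respectively (your parameter assignment for the first case is just a permutation of the paper's $s_1=\lambda^{-1},s_2=\mu^{-1},s_3=z^{-1}$ in the symmetric identity, and the explicit three-tau identity you display checks out). No gaps.
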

\begin{proof}
Firstly for (\ref{shiftwave11}), with the help of
(\ref{tauexpression}) and (\ref{shiftoperator}), one has
\begin{eqnarray*}
&&(G_1(z)-1)\Psi_1(n;x,y;\mu)\Psi_1^*(n-1;x-[\mu^{-1}],y;\lambda)\\
&=&\lambda\Big(1-\frac{\lambda}{\mu}\Big)^{-1}\Big(\frac{\mu}{\lambda}\Big)^ne^{\xi(x,\mu)-\xi(x,\lambda)}z^{-1}\Big(1-\frac{\lambda}{z}\Big)^{-1}\frac{1}{\tau(n;x,y)\tau(n;x-[z^{-1}],y)}\\
&&\times\big((z-\mu)\tau(n;x,y)\tau(n;x+[\lambda^{-1}]-[\mu^{-1}]-[z^{-1}],y)\\
&&-(z-\lambda)\tau(n;x-[z^{-1}],y)\tau(n;x+[\lambda^{-1}]-[\mu^{-1}],y)\big),
\end{eqnarray*}
and further according to (\ref{fayxx3}) with $s_1=\lambda^{-1}$,
$s_2=\mu^{-1}$, and $s_3=z^{-1}$,
\begin{eqnarray*}
&&(G_1(z)-1)\Psi_1(n;x,y;\mu)\Psi_1^*(n-1;x-[\mu^{-1}],y;\lambda)\nonumber\\
&=&-\frac{\lambda\mu}{z}\Big(\frac{\mu}{\lambda}\Big)^ne^{\xi(x,\mu)-\xi(x,\lambda)}\Big(1-\frac{\lambda}{z}\Big)^{-1}\frac{\tau(n;x-[\mu^{-1}],y)\tau(n;x+[\lambda^{-1}]-[z^{-1}],y)}{\tau(n;x,y)\tau(n;x-[z^{-1}],y)}\nonumber\\
&=&-\frac{\mu}{z}\Psi_1(n;x,y;\mu)\Psi_1^*(n-1;x-[z^{-1}],y;\lambda).
\end{eqnarray*}

Then by (\ref{tauexpression}), (\ref{shiftoperator}) and
(\ref{fayxy3}) with $s_1=\lambda^{-1}$, $s_2=\mu^{-1}$, and $s_3=z$,
(\ref{shiftwave12}) is obtained.
\begin{eqnarray*}
&&(\Lambda G_2(z)-1)\Psi_1(n;x,y;\mu)\Psi_1^*(n-1;x-[\mu^{-1}],y;\lambda)\\
&=&\Big(1-\frac{\lambda}{\mu}\Big)^{-1}\Big(\frac{\mu}{\lambda}\Big)^ne^{\xi(x,\mu)-\xi(x,\lambda)}\frac{1}{\tau(n;x,y)\tau(n+1;x,y-[z])}\\
&&\times\big(\mu\tau(n;x,y)\tau(n+1;x+[\lambda^{-1}]-[\mu^{-1}],y-[z])\\
&&-\lambda\tau(n+1;x,y-[z])\tau(n;x+[\lambda^{-1}]-[\mu^{-1}],y)\big)\\
&=&\mu\Big(\frac{\mu}{\lambda}\Big)^ne^{\xi(x,\mu)-\xi(x,\lambda)}\frac{\tau(n;x-[\mu^{-1}],y)\tau(n+1;x+[\lambda^{-1}],y-[z])}{\tau(n;x,y)\tau(n+1;x,y-[z])}\\
&=&\mu\Psi_1(n;x,y;\mu)\Psi_1^*(n;x,y-[z];\lambda).
\end{eqnarray*}

Similarly, (\ref{tauexpression}), (\ref{shiftoperator}) and
(\ref{fayyx3}) for $s_1=\lambda^{-1},s_2=\mu^{-1},s_3=z^{-1}$ lead
to (\ref{shiftwave21})
\begin{eqnarray*}
&&(G_1(z)-1)\Psi_2(n;x,y;\mu^{-1})\Psi_2^*(n;x,y-[\mu^{-1}];\lambda^{-1})\nonumber\\
&=&e^{\xi(y,\mu)-\xi(y,\lambda)}\Big(1-\frac{\lambda}{\mu}\Big)^{-1}\Big(\frac{\lambda
}{\mu
}\Big)^n\frac{1}{\tau(n;x,y)\tau(n;x-[z^{-1}],y)}\times\nonumber\\
&&\Big(\tau(n;x-[z^{-1}],y+[\lambda^{-1}]-[\mu^{-1}])\tau(n;x,y)
-\tau(n;x,y+[\lambda^{-1}]-[\mu^{-1}])\tau(n;x-[z^{-1}],y)\Big)\\
&=&\lambda^{-1}z^{-1}e^{\xi(y,\mu)-\xi(y,\lambda)}\Big(\frac{\lambda
}{\mu
}\Big)^n\frac{\tau(n+1;x,y-[\mu^{-1}])\tau(n-1;x-[z^{-1}],y+[\lambda^{-1}])}{\tau(n;x,y)\tau(n;x-[z^{-1}],y)}\nonumber\\
&=&z^{-1}\Psi_2(n;x,y;\mu^{-1})\Psi_2^*(n-1;x-[z^{-1}],y;\lambda^{-1}).
\end{eqnarray*}

At last,  (\ref{shiftwave22}) can be got by according to
(\ref{fayyy3}) with
$s_1=\lambda^{-1},s_2=\mu^{-1},s_3=z$,
\begin{eqnarray*}
&&(\Lambda G_2(z)-1)\Psi_2(n;x,y;\mu^{-1})\Psi_2^*(n;x,y-[\mu^{-1}];\lambda^{-1})\\
&=&\Big(1-\frac{\lambda}{\mu}\Big)^{-1}\Big(\frac{\lambda}{\mu
}\Big)^ne^{\xi(y,\mu)-\xi(y,\lambda)}\frac{(\lambda^{-1}-z)^{-1}}{\tau(n;x,y)\tau(n+1;x,y-[z])}\\
&&\times\big((\mu^{-1}-z)\tau(n+1;x,y+[\lambda^{-1}]-[\mu^{-1}]-[z])\tau(n;x,y)\nonumber\\
&&-(\lambda^{-1}-z)\tau(n;x,y+[\lambda^{-1}]-[\mu^{-1}])\tau(n+1;x,y-[z])\big)\\
&=&-\lambda^{-1}(\lambda^{-1}-z)^{-1}\Big(\frac{\lambda}{\mu
}\Big)^ne^{\xi(y,\mu)-\xi(y,\lambda)}\frac{\tau(n+1;x,y-[\mu^{-1}])\tau(n;x,y+[\lambda^{-1}]-[z])}{\tau(n;x,y)\tau(n+1;x,y-[z])}\\
&=&-\Psi_2(n;x,y;\mu^{-1})\Psi_2^*(n;x,y-[z];\lambda^{-1}).
\end{eqnarray*}
\end{proof}

\section{the Actions of the Squared Eigenfunction Symmetries on the Tau Function}
Based upon the preparation above, we now give the actions of the
squared eigenfunction symmetry on the wave functions, and further on
the tau function.
\begin{proposition}
The actions of the squared eigenfunction symmetry on the wave
functions is given as follows
\begin{eqnarray}
\frac{\pa_{\alpha_1}\Psi}{\Psi}&=&\Big((G_1(z)-1)\frac{\mathbb{X}(x,\lambda,\mu)\tau}{\tau},(\Lambda G_2(z)-1)
\frac{\mathbb{X}(x,\lambda,\mu)\tau}{\tau}\Big),\label{ghostwave1}\\
\frac{\pa_{\alpha_2}\Psi}{\Psi}&=&\frac{\mu}{\lambda}\Big((G_1(z)-1)\frac{\tilde
{\mathbb{X}}(y,\lambda,\mu)\tau}{\tau},(\Lambda
G_2(z)-1)\frac{\tilde
{\mathbb{X}}(y,\lambda,\mu)\tau}{\tau}\Big).\label{ghostwave2}
\end{eqnarray}
\end{proposition}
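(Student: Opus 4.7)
The plan is to prove each of the four scalar equalities implicit in (\ref{ghostwave1})--(\ref{ghostwave2}) separately (the two components of $\pa_{\alpha_1}\Psi/\Psi$ and the two components of $\pa_{\alpha_2}\Psi/\Psi$), and in each case the argument has two steps. First, I would unpack the squared eigenfunction flow on $\Psi_j$ at row $n$ using (\ref{gsymw}): componentwise this produces an infinite bilinear sum in wave functions, of exactly one of the four shapes---either lower-triangular ($j<n$) coming from $(q\otimes r)_l$, or upper-triangular including the diagonal ($j\geq n$) coming from $-(q\otimes r)_u$---that are collapsed by the identities (\ref{sumwave11})--(\ref{sumwave22}) of Lemma \ref{lmm1}. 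Second, I would rewrite the vertex-operator quotient $\mathbb{X}(x,\lambda,\mu)\tau/\tau$ or $\tilde{\mathbb{X}}(y,\lambda,\mu)\tau/\tau$ in the version of (\ref{vertexwave1})--(\ref{vertexwave2}) in which all the $x$- or $y$-dependence is concentrated in a single bilinear of wave functions, apply the shift $G_1(z)-1$ or $\Lambda G_2(z)-1$, observe that the delta function $\delta(\lambda,\mu)$ has no $x$- or $y$-dependence and is therefore annihilated by the shift, and identify what remains via the matching identity among (\ref{shiftwave11})--(\ref{shiftwave22}) of Lemma \ref{lmm2}.

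As a sample run, for the first component of (\ref{ghostwave1}), namely $\pa_{\alpha_1}\Psi_1/\Psi_1$, the definition yields $(\pa_{\alpha_1}\Psi_1)_n=\lambda^{-1}\Psi_1(n;\mu)\sum_{j<n}\Psi_1^*(j;\lambda)\Psi_1(j;z)$, and (\ref{sumwave11}) converts the sum into $z^{-1}\Psi_1^*(n-1;x-[z^{-1}];\lambda)\Psi_1(n;z)$. For the right-hand side I would use the second form of (\ref{vertexwave1}) (so the only $x$-dependent part is $\Psi_1(n;\mu)\Psi_1^*(n-1;x-[\mu^{-1}];\lambda)$), apply $G_1(z)-1$ which kills $\delta(\lambda,\mu)$, and invoke (\ref{shiftwave11}) to produce $(\lambda z)^{-1}\Psi_1(n;\mu)\Psi_1^*(n-1;x-[z^{-1}];\lambda)$; multiplication by $\Psi_1(n;z)$ then reproduces the LHS. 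The three remaining components follow the same template: $\pa_{\alpha_1}\Psi_2/\Psi_2$ pairs (\ref{sumwave12}) with (\ref{shiftwave12}); $\pa_{\alpha_2}\Psi_1/\Psi_1$ pairs (\ref{sumwave21}) with (\ref{shiftwave21}), now via the appropriate form of (\ref{vertexwave2}); and $\pa_{\alpha_2}\Psi_2/\Psi_2$ pairs (\ref{sumwave22}) with (\ref{shiftwave22}).

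The main thing to watch---this is bookkeeping rather than a genuine obstacle, since Lemmas \ref{lmm1} and \ref{lmm2} have already absorbed the real work---is threefold: choosing the correct one of the two equivalent expressions in (\ref{vertexwave1})--(\ref{vertexwave2}) so that Lemma \ref{lmm2} applies directly; correctly tracking the index shift produced by $\Lambda$ in $\Lambda G_2(z)$ (with the convention $(\Lambda v)_n=v_{n+1}$) so that it aligns with the upper-triangular sum on the LHS; and accounting for the scalar prefactor $\mu/\lambda$ in (\ref{ghostwave2}), which emerges from the $(\lambda/\mu)^n$ in the definition of $\tilde{\mathbb{X}}$ as opposed to the $(\mu/\lambda)^n$ in $\mathbb{X}$ (see (\ref{vertexoperators})), once combined with the index shift coming from $\Lambda$.
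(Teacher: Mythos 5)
Your proposal is correct and follows essentially the same route as the paper's own proof: component by component, collapse the triangular sums from (\ref{gsymw}) via (\ref{sumwave11})--(\ref{sumwave22}), then match the result with the shifted second forms of (\ref{vertexwave1})--(\ref{vertexwave2}) using (\ref{shiftwave11})--(\ref{shiftwave22}), the delta function being annihilated by $G_1(z)-1$ and $\Lambda G_2(z)-1$. The pairings of sum-identities with shift-identities are exactly those used in the paper, so no further comment is needed.
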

\begin{proof}
Firstly by (\ref{sumwave11}), (\ref{shiftwave11}) and
(\ref{vertexwave1}),
\begin{eqnarray*}
\pa_{\alpha_1}\Psi_1&=&(\lambda^{-1}\Psi_1(x,y;\mu)\otimes\Psi_1^*(x,y;\lambda))_l\Psi_1(x,y;z)\\
&=&\left(\lambda^{-1}\Psi_1(n;x,y;\mu)\sum_{k<0}^\infty\Psi_1^*(n+k;x,y;\lambda)\Psi_1(n+k;x,y;z)\right)_{n\in\mathbb{Z}}\\
&=&\left(\lambda^{-1}z^{-1}\Psi_1(n;x,y;\mu)\Psi_1^*(n-1;x-[z^{-1}],y;\lambda)\Psi_1(n;x,y;z)\right)_{n\in\mathbb{Z}}\\
&=&\left(-\lambda^{-1}\mu^{-1}\Psi_1(n;x,y;z)(G_1(z)-1)\big(\Psi_1(n;x,y;\mu)\Psi_1^*(n-1;x-[\mu^{-1}],y;\lambda)\big)\right)_{n\in\mathbb{Z}}\\
&=&\Psi_1(G_1(z)-1)\frac{\mathbb{X}(x,\lambda,\mu)\tau}{\tau}.
\end{eqnarray*}

Then with the help of (\ref{sumwave12}), (\ref{shiftwave12}) and
(\ref{vertexwave1}),
\begin{eqnarray*}
\pa_{\alpha_1}\Psi_2&=&-(\lambda^{-1}\Psi_1(x,y;\mu)\otimes\Psi_1^*(x,y;\lambda))_u\Psi_2(x,y;z)\\
&=&-\left(\lambda^{-1}\Psi_1(n;x,y;\mu)\sum_{k\geq0}^\infty\Psi_1^*(n+k;x,y;\lambda)\Psi_2(n+k;x,y;z)\right)_{n\in\mathbb{Z}}\\
&=&-\left(\lambda^{-1}\Psi_1(n;x,y;\mu)\Psi_1^*(n;x,y-[z];\lambda)\Psi_2(n;x,y;z)\right)_{n\in\mathbb{Z}}\\
&=&\left(-\lambda^{-1}\mu^{-1}\Psi_2(n;x,y;z)(\Lambda G_2(z)-1)\big(\Psi_1(n;x,y;\mu)\Psi_1^*(n-1;x-[\mu^{-1}],y;\lambda)\big)\right)_{n\in\mathbb{Z}}\\
&=&\Psi_2(\Lambda
G_2(z)-1)\frac{\mathbb{X}(x,\lambda,\mu)\tau}{\tau},
\end{eqnarray*}
and (\ref{sumwave21}), (\ref{shiftwave21}) and (\ref{vertexwave2})
lead to
\begin{eqnarray*}
\pa_{\alpha_2}\Psi_1&=&-(\lambda^{-1}\Psi_2(x,y;\mu^{-1})\otimes\Psi_2^*(x,y;\lambda^{-1}))_l\Psi_1(x,y;z)\\
&=&-\left(\lambda^{-1}\Psi_2(n;x,y;\mu^{-1})\sum_{k<0}^\infty\Psi_2^*(n+k;x,y;\lambda^{-1})\Psi_1(n+k;x,y;z)\right)_{n\in\mathbb{Z}}\\
&=&-\left(\lambda^{-1}z^{-1}\Psi_2(n;x,y;\mu^{-1})\Psi_2^*(n-1;x-[z^{-1}],y;\lambda^{-1})\Psi_1(n;x,y;z)\right)_{n\in\mathbb{Z}}\\
&=&\left(-\lambda^{-1}\Psi_1(n;x,y;z)(G_1(z)-1)\big(\Psi_2(n;x,y;\mu^{-1})\Psi_2^*(n;x,y-[\mu^{-1}];\lambda^{-1})\big)\right)_{n\in\mathbb{Z}}\\
&=&\frac{\mu}{\lambda}\Psi_1(G_1(z)-1)\frac{\tilde
{\mathbb{X}}(y,\lambda,\mu)\tau}{\tau}.
\end{eqnarray*}

At last, according to (\ref{sumwave22}), (\ref{shiftwave22}) and
(\ref{vertexwave2})
\begin{eqnarray*}
\pa_{\alpha_2}\Psi_2&=&(\lambda^{-1}\Psi_2(x,y;\mu^{-1})\otimes\Psi_2^*(x,y;\lambda^{-1}))_u\Psi_2(x,y;z)\\
&=&\left(\lambda^{-1}\Psi_2(n;x,y;\mu^{-1})\sum_{k\geq0}^\infty\Psi_2^*(n+k;x,y;\lambda^{-1})\Psi_2(n+k;x,y;z)\right)_{n\in\mathbb{Z}}\\
&=&\left(\lambda^{-1}\Psi_2(n;x,y;\mu^{-1})\Psi_2^*(n-1;x,y-[z];\lambda^{-1})\Psi_2(n;x,y;z)\right)_{n\in\mathbb{Z}}\\
&=&\left(-\lambda^{-1}\Psi_2(n;x,y;z)(\Lambda G_2(z)-1)\big(\Psi_2(n;x,y;\mu^{-1})\Psi_2^*(n;x,y-[\mu^{-1}];\lambda^{-1})\big)\right)_{n\in\mathbb{Z}}\\
&=&\frac{\mu}{\lambda}\Psi_2(\Lambda G_2(z)-1)\frac{\tilde
{\mathbb{X}}(y,\lambda,\mu)\tau}{\tau}.
\end{eqnarray*}
\end{proof}
At last according to the relations (\ref{tauexpression}) between the
wave functions and the tau functions, we get the following
proposition.
\begin{proposition}
The actions of the squared eigenfunction symmetries on the tau
functions are given as follows
\begin{eqnarray}
\pa_{\alpha_1}\tau=\mathbb{X}(x,\lambda,\mu)\tau,\quad
\pa_{\alpha_2}\tau=\frac{\mu}{\lambda}\tilde
{\mathbb{X}}(y,\lambda,\mu)\tau. \label{ghosttau2}
\end{eqnarray}
\end{proposition}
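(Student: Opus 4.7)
The plan is to derive (\ref{ghosttau2}) by comparing two distinct expressions for $\pa_{\alpha_i}\Psi/\Psi$: the one furnished by the previous proposition, and the one obtained by differentiating the tau-function formulas (\ref{tauexpression}). Since the prefactors $e^{\xi(x,z)}z^n$ in $\Psi_1$ and $e^{\xi(y,z^{-1})}z^n$ in $\Psi_2$ are free of the flow parameter, the quotient rule applied to $\Psi_1(n;x,y;z)=e^{\xi(x,z)}z^n\tau(n;x-[z^{-1}],y)/\tau(n;x,y)$ yields
$$\frac{\pa_{\alpha_1}\Psi_1}{\Psi_1} \;=\; (G_1(z)-1)\,\frac{\pa_{\alpha_1}\tau}{\tau},$$
and the analogous manipulation with $\Psi_2(n;x,y;z)=e^{\xi(y,z^{-1})}z^n\tau(n+1;x,y-[z])/\tau(n;x,y)$ gives $\pa_{\alpha_1}\Psi_2/\Psi_2=(\Lambda G_2(z)-1)(\pa_{\alpha_1}\tau/\tau)$. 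Replacing $\pa_{\alpha_1}$ by $\pa_{\alpha_2}$ produces the two parallel identities.

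I would then subtract these from (\ref{ghostwave1}). Setting $F_1 := (\pa_{\alpha_1}\tau - \mathbb{X}(x,\lambda,\mu)\tau)/\tau$, the comparison produces
$$(G_1(z)-1)F_1 \;=\; 0, \qquad (\Lambda G_2(z)-1)F_1 \;=\; 0$$
for every $z$. The first equation says $F_1(n;x-[z^{-1}],y)=F_1(n;x,y)$ for all $z$; expanding in powers of $z^{-1}$ forces every $\pa_{x_\ell}F_1$ to vanish, so $F_1$ is independent of $x$. The second equation then reads $F_1(n+1;y-[z])=F_1(n;y)$; taking $z=0$ removes the $n$-dependence, and the same expansion argument in powers of $z$ rules out $y$-dependence.

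What remains is the integration-constant step: one must show that the residual scalar $F_1$ (which can still depend on $\lambda,\mu$ and on the flow parameters themselves) vanishes. The natural resolution is to exploit the intrinsic ambiguity in $\tau$, which is determined by the Lax operator only up to a multiplicative function of the flow parameters alone, so any such constant is absorbed into the normalization; alternatively, one may fix it by checking a convenient limit (e.g.\ $\mu\to\infty$) where both sides reduce to an explicit formal series. The same three-step argument, with (\ref{ghostwave2}) and $\tilde{\mathbb{X}}(y,\lambda,\mu)$ in place of (\ref{ghostwave1}) and $\mathbb{X}(x,\lambda,\mu)$, then yields $\pa_{\alpha_2}\tau=(\mu/\lambda)\tilde{\mathbb{X}}(y,\lambda,\mu)\tau$, where the $\mu/\lambda$ factor is inherited verbatim from the second line of the previous proposition. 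I expect the handling of this integration constant to be the only nontrivial obstacle; the rest is a clean formal manipulation.
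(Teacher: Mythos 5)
Your proposal is correct and follows the same route the paper intends: the paper offers no written proof beyond the remark that the result follows from the previous proposition together with (\ref{tauexpression}), and your argument is precisely the standard filling-in of that one-line derivation, including the quotient-rule computation giving $(G_1(z)-1)$ and $(\Lambda G_2(z)-1)$ acting on $\pa_{\alpha_i}\tau/\tau$. Your treatment of the integration constant (a quantity independent of $x$, $y$ and $n$, absorbed into the normalization of $\tau$ with respect to the flow parameter) is the accepted resolution of the only subtle point, which the paper silently glosses over.
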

\section{Conclusions and Discussions}
The squared eigenfunction symmetry of the Toda lattice hierarchy is
constructed explicitly in the form of the Kronecker product of the
vector eigenfunction and the vector adjoint eigenfunction. And the
relation with the additional symmetry is also investigated, that is,
the squared eigenfunction symmetry can be viewed as the generating
function of the additional symmetries when the eigenfunction and the
adjoint eigenfunction are the wave function and the adjoint wave
function respectively. The action of the particular squared
eigenfunction symmetry on the tau function of the Toda lattice
hierarchy is obtained with the help of the Fay-like identities and
the relations about the wave functions. And thus another proof of
the ASvM formulas for the Toda lattice hierarchy is given. The
squared eigenfunction symmetry here is expected to be applied in the
study of the symmetry constraints for the Toda lattice hierarchy.
And the spectral representation for the eigenfunction of the Toda
lattice hierarchy, similar to KP hierarchy case \cite{aratyn1998},
is also expected to be set up. We will focus on these questions
later in the future paper.

{\bf Acknowledgements}\\
This work is supported by ``the Fundamental Research Funds for the
Central Universities" No. 2012QNA45.

\end{document}